\documentclass[a4paper,onecolumn,10pt]{amsart}

\usepackage{amsmath}
\usepackage{amssymb}  
\usepackage{amsthm}
\usepackage{amsfonts}
\usepackage{graphicx}
\usepackage{cancel}
\usepackage{bbm}
\usepackage{url}
\usepackage{enumerate} 
\usepackage{ upgreek }
\usepackage{dsfont}
\usepackage{color}
\usepackage{subcaption}

\usepackage{makecell}
\usepackage{diagbox}
\usepackage{wrapfig}
\usepackage{rotating}
\usepackage{tabularx}

\usepackage{hyperref}
\hypersetup{
    colorlinks=true,
    linkcolor=blue,
    citecolor=red,
    filecolor=magenta,      
    urlcolor=cyan,
}

\urlstyle{same}


%
%



\newcommand{\ket}[1]{| #1 \rangle}

\newcommand{\bra}[1]{\langle #1 |}

\newcommand{\proj}[2]{| #1 \rangle\!\langle #2 |}



\newcommand{\id}{\ensuremath{\mathds{1}}}





\newcommand{\cB}{\mathcal{B}}

\newcommand{\cU}{\mathcal{U}}












%

%

%


\def\beq{\begin{equation}}
\def\eeq{\end{equation}}
\def\bq{\begin{quote}}
\def\eq{\end{quote}}
\def\ben{\begin{enumerate}}
\def\een{\end{enumerate}}
\def\bit{\begin{itemize}}
\def\eit{\end{itemize}}

\def\ra{\rightarrow}

\def\lb{\left(}
\def\rb{\right)}
\def\lset{\lbrace}
\def\rset{\rbrace}

\def\r|{\right|}
\def\lbr{\left[}
\def\rbr{\right]}
\def\ident{\textnormal{id}}
\def\one{\id}

\newcommand\C{\mathbbm{C}}

\newcommand\R{\mathbbm{R}}
\newcommand\N{\mathbbm{N}}
\newcommand\M{\mathcal{M}}
\newcommand\D{\mathcal{D}}




\DeclareMathOperator{\Tr}{Tr}

\DeclareMathOperator{\Pasym}{P_{\text{asym}}}
\DeclareMathOperator{\Psym}{P_{\text{sym}}}

\theoremstyle{plain}
\newtheorem{thm}{Theorem}[section]
\newtheorem{lem}[thm]{Lemma}

\newtheorem{defn}[thm]{Definition}

\theoremstyle{definition}

%
%

\begin{document}
\title{{On the monotonicity of a quantum optimal transport cost}}
\author{Alexander M\"uller-Hermes}
\address{\small{Department of Mathematics, University of Oslo, P.O. box 1053, Blindern, 0316 Oslo, Norway}}
\email{muellerh@math.uio.no}

\begin{abstract}
We show that the quantum generalization of the $2$-Wasserstein distance proposed by Chakrabarti et al. is not monotone under partial traces. This disproves a recent conjecture by Friedland et al. Finally, we propose a stabilized version of the original definition, which we show to be monotone under the application of general quantum channels. 
\end{abstract}

\maketitle
\date{\today}

\section{Introduction}

For every dimension $d\in\N$ we define the symmetric subspace 
\[
\C^d\vee \C^d = \lset \ket{\psi}\in \C^d\otimes \C^d ~:~\mathbbm{F}_d \ket{\psi} = \ket{\psi}\rset ,
\]
and the antisymmetric subspace
\[
\C^d\wedge \C^d = \lset \ket{\psi}\in \C^d\otimes \C^d ~:~\mathbbm{F}_d \ket{\psi} = -\ket{\psi}\rset ,
\]
where we used the flip operator $\mathbbm{F}_d:\C^d\otimes \C^d\ra \C^d\otimes \C^d$ defined by
\[
\mathbbm{F}\ket{a}\otimes \ket{b} = \ket{b}\otimes \ket{a} ,
\]
and extended linearly. The symmetric subspace and antisymmetric subspace are subspaces of $\C^d\otimes \C^d$ and we denote by $\Psym(d)$ and $\Pasym(d)$ their respective orthogonal projections. In~\cite{zhou2022quantum} these projectors where used to study quantum versions of classical earth mover's distances. Along similar lines, a quantum generalization of the classical Wasserstein distance of order $2$ with respect to the Hamming distance (see~\cite{villani2009optimal} for details) was proposed in~\cite{chakrabarti2019quantum}: 
 
\begin{defn}\label{defn:Wasserstein}
For quantum states $\rho,\sigma\in \D\lb \C^d\rb$ the \emph{quantum optimal transport cost} is given by
\[
T(\rho,\sigma) = \min_{\tau_{AB}\in \D\lb \C^d\otimes \C^d\rb} \Tr\lbr \tau_{AB}\Pasym(d) \rbr ,
\]
where the minimum is over all quantum states $\tau_{AB}\in \D\lb \C^d\otimes \C^d\rb$ with marginals $\tau_A=\rho$ and $\tau_B=\sigma$. The quantity 
\[
W(\rho,\sigma) = \sqrt{T\lb \rho,\sigma\rb} ,
\]
is called the \emph{quantum Wasserstein semi-distance} of order $2$ between $\rho$ and $\sigma$.
\end{defn}

Many properties of these quantities where discussed in~\cite{cole2021quantum,friedland2022quantum}, and it is conjectured that the quantum Wasserstein semi-distance of order $2$ satisfies the triangle inequality, which would make it a genuine distance. It should be pointed out that there are other proposals for quantum generalizations of Wasserstein distances and optimal transport costs including definitions for orders different than $2$. See for example~\cite{zyczkowski1998monge,carlen2014analog,golse2018wave,de2021quantum2,de2021quantum,cole2021quantum} and references therein. In this article, we will only discuss the quantum optimal transport cost and quantum Wasserstein semi-distance of order $2$ from Definition \ref{defn:Wasserstein}. For brevity, we will refer to these quantities as \emph{the} quantum optimal transport cost and \emph{the} quantum Wasserstein semi-distance although we do not claim that these are the ``correct'' generalizations in any sense. In fact, we will even argue that these definitions have shortcomings, when used as distance measures in a context of processing quantum information. 

It was asked in~\cite{bistron2022monotonicity} whether the quantum optimal transport cost (and hence the quantum Wasserstein semi-distance) is monotone under the application of quantum channels, i.e., whether the inequality 
\[
T(\phi(\rho),\phi(\sigma))\leq T(\rho,\sigma),
\]
holds for all pairs of quantum states $\rho,\sigma\in \D(\C^d)$ and any quantum channel $\phi:\M_d\ra \M_{d'}$. It was observed (see~\cite{bistron2022monotonicity}) that this monotonicity holds when $d=d'=2$ and when $\phi$ is a mixed unitary quantum channel. Based on numerical evidence, it was then conjectured in~\cite{friedland2022quantum} that monotonicity holds in general. Here, we will show that this conjecture is false. The quantum optimal transport cost is \emph{not} monotone under general quantum channels, and in particular it fails to be monotone for arbitrary pairs of quantum states $\rho,\sigma$ when $\phi$ is a partial trace. This article is structured as follows:
\begin{itemize}
\item In Section \ref{sec:Props} we will review some useful properties of the quantum optimal transport cost. 
\item In Section \ref{sec:mon} we will show that the quantum optimal transport cost is not monotone under partial traces.
\item In Section \ref{sec:stab} we will introduce a stabilized version of the quantum optimal transport cost. We will show that this quantity is monotone under the application of quantum channels and that it shares many properties of the original definition. 
\end{itemize}

\section{Properties of the quantum optimal transport cost}\label{sec:Props}

To prove properties of the quantum optimal transport cost it will be useful to consider an alternative ordering of the tensor factors when having tensor product input states. We will use the notation
\[
\Pasym(d_1\otimes d_2) = \frac{1}{2}\lb \one_{d_1}\otimes \one_{d_1}\otimes \one_{d_2}\otimes \one_{d_2} - \mathbbm{F}_{d_1}\otimes \mathbbm{F}_{d_2}\rb ,
\]
to denote the projector onto the antisymmetric subspace $(\C^{d_1}\otimes \C^{d_2})\wedge (\C^{d_1}\otimes \C^{d_2})$, but where the tensor factors are reshuffled so that $\Pasym(d_1\otimes d_2)$ acts on $\C^{d_1}\otimes\C^{d_1}\otimes \C^{d_2}\otimes\C^{d_2}$. We will often use letters `A' and `B' to clarify the situation: We would, for example, denote by $A_1$ and $B_1$ the two tensor factors with dimension $d_1$, and by $A_2$ and $B_2$ the two tensor factors with dimension $d_2$. The antisymmetric projector $\Pasym(d_1\otimes d_2)$ is defined with respect to the exchange of the $A$-labeled and the $B$-labeled tensor factors, where we use the ordering $A_1B_1A_2B_2$ instead of $A_1A_2B_1B_2$. 

For quantum states $\rho_{1},\sigma_{1}\in \D\lb \C^{d_1}\rb$ and $\rho_{2},\sigma_{2}\in \D\lb \C^{d_2}\rb$ we have
\[
T(\rho_{1}\otimes \rho_{2},\sigma_{1}\otimes \sigma_{2}) = \min_{\tau_{A_1 B_1A_2 B_2}} \Tr\lbr\tau_{A_1 B_1A_2 B_2} \Pasym(d_1\otimes d_2) \rbr ,
\]
where the minimum is over all quantum states
\[
\tau_{A_1 B_1A_2 B_2}\in \D\lb \C^{d_1}\otimes\C^{d_1}\otimes\C^{d_2}\otimes \C^{d_2}\rb ,
\]
such that the marginals are
\[
\tau_{A_1A_2} = \rho_{1}\otimes \rho_{2} ,
\]
and 
\[
\tau_{B_1B_2} = \sigma_{1}\otimes \sigma_{2} .
\]
Finally, it is useful to note the identity
\begin{equation}\label{equ:Key}
\Pasym(d_1\otimes d_2) = \Pasym(d_1) \otimes \Psym(d_2) + \Psym(d_1)\otimes \Pasym(d_2).
\end{equation}

\subsection{Formulation as a semidefinite problem}

Clearly, the quantum optimal transport cost $T(\rho,\sigma)$ is a semidefinite program and we can compute its dual as
\[
T^*(\rho,\sigma) = \sup_{E,F} \Tr\lbr E\rho\rbr + \Tr\lbr F \sigma\rbr ,
\]
for $\rho,\sigma\in \D\lb \C^{d}\rb$, where the supremum goes over $E,F\in \M(\C^d)$ such that 
\begin{equation}\label{equ:EFCondition}
E\otimes \one_d + \one_d\otimes F \leq \Pasym(d) .
\end{equation}
It is clear that 
\begin{equation}\label{equ:orderSDPs}
T^*(\rho,\sigma) \leq T(\rho,\sigma),
\end{equation}
since 
\[
\Tr\lbr E\rho\rbr + \Tr\lbr F \sigma\rbr = \Tr\lbr \lb E\otimes \one_d + \one_d\otimes F\rb \tau_{AB}\rbr \leq \Tr\lbr \tau_{AB}\Pasym(d)\rbr ,
\]
for any quantum state $\tau_{AB}$ with marginals $\tau_A=\rho$ and $\tau_B=\sigma$ and any $E,F\in \M(\C^d)$ satisfying \eqref{equ:EFCondition}. Moreover, as shown in~\cite{chakrabarti2019quantum}, strong duality holds for the pair of semidefinite programs, and we have   
\[
T^*(\rho,\sigma) = T(\rho,\sigma) ,
\]
for all $\rho,\sigma\in \D\lb \C^{d}\rb$.

\subsection{Monotonicity under tensoring with a state}

We start with a theorem proved in \cite{friedland2022quantum} showing that the quantum optimal transport cost between two quantum states decreases under tensoring with the another quantum state:

\begin{thm}[\cite{friedland2022quantum}]\label{thm:MonUnderTensor}
For quantum states $\rho,\sigma\in D(\C^{d_1})$ and any quantum state $\gamma\in \D\lb \C^{d_2}\rb$ we have
\[
T(\rho\otimes \gamma,\sigma\otimes \gamma)\leq T(\rho,\sigma).
\]
\end{thm}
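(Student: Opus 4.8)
The plan is to construct an explicit feasible coupling for $T(\rho\otimes\gamma,\sigma\otimes\gamma)$ out of an optimal coupling for $T(\rho,\sigma)$, exploiting the identity~\eqref{equ:Key}. First I would fix a coupling $\tau_{A_1B_1}\in\D\lb\C^{d_1}\otimes\C^{d_1}\rb$ achieving the minimum in the definition of $T(\rho,\sigma)$, so that $\tau_{A_1}=\rho$, $\tau_{B_1}=\sigma$, and $\Tr\lbr\tau_{A_1B_1}\Pasym(d_1)\rbr=T(\rho,\sigma)$.

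The crucial ingredient is a well-chosen coupling of $\gamma$ with itself on the ancillary system $A_2B_2$. Writing a spectral decomposition $\gamma=\sum_i p_i\ketbra{e_i}$, I would take the diagonal coupling $\Delta_{A_2B_2}=\sum_i p_i\,\ketbra{e_i}\otimes\ketbra{e_i}$. This is a state on $\C^{d_2}\otimes\C^{d_2}$ with both marginals equal to $\gamma$, and, crucially, it is supported entirely on the symmetric subspace $\C^{d_2}\vee\C^{d_2}$, since each vector $\ket{e_i}\otimes\ket{e_i}$ is fixed by $\mathbbm{F}_{d_2}$. Consequently $\Psym(d_2)\Delta_{A_2B_2}=\Delta_{A_2B_2}$ and $\Pasym(d_2)\Delta_{A_2B_2}=0$.

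Next I would propose $\tau_{A_1B_1}\otimes\Delta_{A_2B_2}$ as a feasible point for the optimization defining $T(\rho\otimes\gamma,\sigma\otimes\gamma)$. Its $A_1A_2$ marginal is $\tau_{A_1}\otimes\Delta_{A_2}=\rho\otimes\gamma$ and its $B_1B_2$ marginal is $\sigma\otimes\gamma$, so the constraints are met. Using~\eqref{equ:Key}, the cost of this coupling splits as
\[
\Tr\lbr(\tau_{A_1B_1}\otimes\Delta_{A_2B_2})(\Pasym(d_1)\otimes\Psym(d_2))\rbr + \Tr\lbr(\tau_{A_1B_1}\otimes\Delta_{A_2B_2})(\Psym(d_1)\otimes\Pasym(d_2))\rbr .
\]
The second summand vanishes because it factors as $\Tr\lbr\tau_{A_1B_1}\Psym(d_1)\rbr\cdot\Tr\lbr\Delta_{A_2B_2}\Pasym(d_2)\rbr$ and the second factor is zero. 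The first summand factors as $\Tr\lbr\tau_{A_1B_1}\Pasym(d_1)\rbr\cdot\Tr\lbr\Delta_{A_2B_2}\Psym(d_2)\rbr=T(\rho,\sigma)\cdot 1$, using $\Tr\lbr\Delta_{A_2B_2}\Psym(d_2)\rbr=\Tr\lbr\Delta_{A_2B_2}\rbr=1$. Hence the chosen coupling has cost exactly $T(\rho,\sigma)$, and since $T(\rho\otimes\gamma,\sigma\otimes\gamma)$ is a minimum over all couplings, the claimed inequality follows.

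I do not anticipate a serious obstacle: the whole argument rests on the single observation that $\gamma$ can be coupled to itself through a state living in the symmetric subspace, which kills the $\Psym(d_1)\otimes\Pasym(d_2)$ term in~\eqref{equ:Key}. The only points requiring (minor) care are the verification of the marginals and the factorization of the two traces across the reshuffled tensor ordering $A_1B_1A_2B_2$.
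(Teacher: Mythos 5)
Your proof is correct, and it differs from the paper's in one substantive way. The paper first invokes joint convexity of $T$ (citing \cite{cole2021quantum}) to reduce to the case of a \emph{pure} ancilla state $\gamma=\proj{\psi}{\psi}$, and then uses the product coupling $\tau_{A_1B_1}\otimes\proj{\psi}{\psi}\otimes\proj{\psi}{\psi}$, whose ancilla part lies in $\C^{d_2}\vee\C^{d_2}$. You instead handle a general mixed $\gamma$ directly by coupling it to itself through the diagonal coupling $\Delta_{A_2B_2}=\sum_i p_i\,\ketbra{e_i}\otimes\ketbra{e_i}$, correctly observing that each $\ket{e_i}\otimes\ket{e_i}$ is symmetric, so $\Tr\lbr \Delta_{A_2B_2}\Pasym(d_2)\rbr=0$ and the $\Psym(d_1)\otimes\Pasym(d_2)$ term in \eqref{equ:Key} vanishes; for pure $\gamma$ your coupling reduces exactly to the paper's. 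The key mechanism --- choosing a self-coupling of $\gamma$ supported on the symmetric subspace to kill the cross term --- is identical in both arguments, but your version is self-contained: it avoids the appeal to the external joint-convexity result, at the modest cost of introducing the spectral decomposition and verifying the marginals of $\Delta_{A_2B_2}$. Your bookkeeping with the reshuffled ordering $A_1B_1A_2B_2$ and the two factorized traces is accurate, so there is no gap.
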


For the reader's convenience we include a proof of this result: 

\begin{proof}
The quantum optimal transport cost $T$ is jointly convex (see~\cite[Proposition 2.2]{cole2021quantum}) and it is enough to prove the inequality for any pure state
\[
\gamma=\proj{\psi}{\psi}\in \D\lb \C^{d_2}\rb .
\]
Given a quantum state $\tau_{A_1B_1}\in \D(\C^{d_1}\otimes \C^{d_1})$ with marginals $\tau_{A_1}=\rho$ and $\tau_{B_1}=\sigma$, we note that $\tau_{A_1B_1A_2B_2} = \tau_{A_1B_1}\otimes \proj{\psi}{\psi}\otimes \proj{\psi}{\psi}$ is a quantum state with marginals $\rho\otimes \proj{\psi}{\psi}$ and $\sigma\otimes \proj{\psi}{\psi}$. We find that 
\begin{align*}
T&(\rho\otimes \proj{\psi}{\psi},\sigma\otimes \proj{\psi}{\psi}) \\
&\leq \Tr\lbr \tau_{A_1B_1A_2B_2} \Pasym(d_1\otimes d_2)\rbr \\
&= \Tr\lbr (\tau_{A_1B_1}\otimes \proj{\psi}{\psi}\otimes \proj{\psi}{\psi})(\Pasym(d_1) \otimes \Psym(d_2) + \Psym(d_1)\otimes \Pasym(d_2))\rbr \\
&=\Tr\lbr \tau_{A_1B_1}\Pasym(d_1)\rbr,
\end{align*}
since $\ket{\psi}\otimes \ket{\psi}\in \C^{d_2}\vee \C^{d_2}$. This shows that 
\[
T(\rho\otimes \proj{\psi}{\psi},\sigma\otimes \proj{\psi}{\psi}) \leq T(\rho,\sigma),
\]
and the proof is finished.
\end{proof}

The main result of our article is to show that the inequality in the previous theorem is, in general, not an equality, and there are quantum states where it is strict. This shows, in particular, that the quantum optimal transport cost (and hence the quantum Wasserstein semi-distance) is not monotone under partial traces.

\section{The quantum optimal transport cost is not monotone}\label{sec:mon}

We start with an elementary lemma:

\begin{lem}
For $E,F\in \M(\C^{d_1})$ the following are equivalent:
\begin{enumerate}
\item We have 
\[
E\otimes \one_{d_1} \otimes \one_{d_2}\otimes \one_{d_2} + \one_{d_1}\otimes F \otimes \one_{d_2}\otimes \one_{d_2} \leq \Pasym(d_1\otimes d_2).
\]
\item We have both
\[
E\otimes \one_{d_1} + \one_{d_1}\otimes F \leq \Pasym(d_1) ,
\]
and
\[
E\otimes \one_{d_1} + \one_{d_1}\otimes F \leq \Psym(d_1) .
\]
\end{enumerate}
\end{lem}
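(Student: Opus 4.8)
The plan is to collapse both statements to a single operator inequality on $\C^{d_1}\otimes\C^{d_1}$ by exploiting the orthogonal decomposition of the second pair of tensor factors. Write $G = E\otimes\one_{d_1} + \one_{d_1}\otimes F$ for the operator on the $A_1B_1$ system that appears in both conditions. On the $A_2B_2$ system we have $\one_{d_2}\otimes\one_{d_2} = \Psym(d_2) + \Pasym(d_2)$, so the left-hand side of (1) factors as $G\otimes\one_{d_2}\otimes\one_{d_2} = G\otimes\Psym(d_2) + G\otimes\Pasym(d_2)$. Feeding the identity \eqref{equ:Key} into the right-hand side and rearranging, condition (1) becomes
\[
\lb G - \Pasym(d_1)\rb\otimes\Psym(d_2) + \lb G - \Psym(d_1)\rb\otimes\Pasym(d_2) \leq 0 .
\]

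The key observation is that this operator is block diagonal with respect to the orthogonal splitting $\C^{d_2}\otimes\C^{d_2} = (\C^{d_2}\vee\C^{d_2})\oplus(\C^{d_2}\wedge\C^{d_2})$ of the $A_2B_2$ factors: on the symmetric block it acts as $\lb G - \Pasym(d_1)\rb\otimes\Psym(d_2)$ and on the antisymmetric block as $\lb G - \Psym(d_1)\rb\otimes\Pasym(d_2)$. Hence the displayed inequality holds if and only if each block is negative semidefinite. For (2)$\Rightarrow$(1) this is immediate: under the two hypotheses we have $G - \Pasym(d_1)\leq 0$ and $G - \Psym(d_1)\leq 0$, and tensoring a negative semidefinite operator with the positive semidefinite projectors $\Psym(d_2)$, $\Pasym(d_2)$ gives negative semidefinite summands, whose sum is negative semidefinite.

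For the converse (1)$\Rightarrow$(2) I would extract the blocks by testing against product vectors. Taking $\ket{\phi}\otimes\ket{s}$ with $\ket{\phi}\in\C^{d_1}\otimes\C^{d_1}$ arbitrary and $\ket{s}\in\C^{d_2}\vee\C^{d_2}$ a unit vector, the relations $\Psym(d_2)\ket{s}=\ket{s}$ and $\Pasym(d_2)\ket{s}=0$ collapse the quadratic form to $\bracket{\phi}{G - \Pasym(d_1)}{\phi}\leq 0$, which, as $\ket{\phi}$ ranges over all vectors, yields the first inequality of (2). Testing instead with a unit vector in $\C^{d_2}\wedge\C^{d_2}$ gives $\bracket{\phi}{G - \Psym(d_1)}{\phi}\leq 0$ and hence the second.

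I expect no genuine obstacle beyond careful bookkeeping of the tensor factors in the reordered convention $A_1B_1A_2B_2$; the heart of the argument is simply the reformulation above together with the blockwise reduction. The one point meriting a remark is that the second inequality of (2) can only be recovered when the antisymmetric subspace $\C^{d_2}\wedge\C^{d_2}$ is nonzero, i.e.\ when $d_2\geq 2$. For $d_2=1$ the projector $\Pasym(d_2)$ vanishes and the second condition in (2) is vacuous, which is exactly what the reduction predicts.
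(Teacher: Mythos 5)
Your proof is correct and takes essentially the same route as the paper: both rearrange condition (1) via \eqref{equ:Key} and $\one_{d_2}\otimes\one_{d_2}=\Psym(d_2)+\Pasym(d_2)$ into the sum $\lb \Pasym(d_1)-G\rb\otimes\Psym(d_2)+\lb \Psym(d_1)-G\rb\otimes\Pasym(d_2)$ and then deduce the equivalence from the fact that $\Psym(d_2)$ and $\Pasym(d_2)$ are projections onto orthogonal subspaces. Your test-vector argument for (1)$\Rightarrow$(2) merely spells out the step the paper compresses into its final sentence, and your caveat that recovering the second inequality of (2) needs $d_2\geq 2$ (so that $\C^{d_2}\wedge\C^{d_2}\neq 0$) is a correct observation that the paper leaves implicit.
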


\begin{proof}
With \eqref{equ:Key} and since 
\[
\one_{d_2}\otimes \one_{d_2} = \Pasym(d_2) + \Psym(d_2) ,
\]
we find that 
\begin{align*}
&\Pasym(d_1\otimes d_2)- E\otimes \one_{d_1} \otimes \one_{d_2}\otimes \one_{d_2} - \one_{d_1}\otimes F \otimes \one_{d_2}\otimes \one_{d_2} \\
&=(\Pasym(d_1)-E\otimes \one_{d_1} - \one_{d_1}\otimes F) \otimes \Psym(d_2)\\
&\quad\quad\quad + (\Psym(d_1)-E\otimes \one_{d_1} - \one_{d_1}\otimes F)\otimes \Pasym(d_2).
\end{align*}
Since $\Psym(d_2)$ and $\Pasym(d_2)$ are self-adjoint projections onto orthogonal subspaces the statement of the theorem follows.
\end{proof}

To show that the quantum optimal transport cost is not monotone under quantum channels, we will need Hermitian matrices $E,F\in \M\lb \C^d\rb$ satisfying 
\begin{equation}\label{equ:1}
E\otimes \one_d + \one_d\otimes F \leq \Pasym(d) ,
\end{equation}
but also
\begin{equation}\label{equ:2}
E\otimes \one_d + \one_d\otimes F \nleq \Psym(d) .
\end{equation}
Such examples can be found numerically for $d\geq 4$. For instance, we can choose
\[
E = \begin{pmatrix} -1.37 & 0 & 0 & 0 \\
0 & 0.02 & 0 & 0 \\
0 & 0 & 0.17 & 0 \\
0 & 0 & 0 & 0.26 
\end{pmatrix}
\]
and 
\[
F = \begin{pmatrix} 0.1165 & -0.02 + 0.01i & 0.03-0.05i & -0.04-0.05i \\
-0.02-0.01i & -0.0935 & 0.02i & 0.16-0.11i \\
0.03+0.05i & -0.02i & -0.2335 & 0.06+0.11i \\
-0.04 + 0.05i & 0.16 +0.11i & 0.06-0.11i & -1.1435
\end{pmatrix}.
\]
By embedding these examples into matrices of larger dimensions, it is straightforward to show that similar examples exist for any $d\geq 4$. Indeed, for $k\geq 1$ and $\alpha\in \R^+$ consider the Hermitian matrices $E',F'\in \M\lb \C^{d+k}\rb$ given by
\[
E' = \begin{pmatrix} E & 0 \\ 0 & -\alpha\one_k\end{pmatrix} \quad\text{ and }\quad F' = \begin{pmatrix} F & 0 \\ 0 & -\alpha\one_k\end{pmatrix},
\]
where $E,F\in \M\lb\C^d\rb$ are Hermitian matrices satisfying \eqref{equ:1} and \eqref{equ:2}. If $\alpha$ is chosen large enough, then it is straightforward to verify that
\[
E'\otimes \one_{d+k} + \one_{d+k}\otimes F' \leq \Pasym(d+k) ,
\]
and of course we have
\[
E'\otimes \one_{d+k} + \one_{d+k}\otimes F' \nleq \Psym(d+k) .
\]
While this construction yields examples for any $d\geq 4$, we do not know whether such examples exist for $d=3$. We can now prove our main result:

\begin{thm}\label{thm:ViolMon}
For any $d\geq 4$ there are quantum states $\rho,\sigma\in \D\lb \C^d\rb$ such that 
\[
T(\rho,\sigma) > T(\rho\otimes \frac{\one_2}{2},\sigma\otimes \frac{\one_2}{2}).
\]
In particular, the quantum optimal transport cost $T$ and the quantum Wasserstein semi-distance $W$ are not monotone under partial traces over qubit systems.
\end{thm}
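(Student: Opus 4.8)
The plan is to sandwich both sides between a single auxiliary quantity built from the \emph{symmetric} projector. Write
\[
\widetilde T(\rho,\sigma) = \min_{\tau_{AB}} \Tr\lbr \tau_{AB}\Psym(d)\rbr
\]
for the analogue of $T$ in which $\Pasym(d)$ is replaced by $\Psym(d)$, the minimum again being over all couplings $\tau_{AB}\in\D(\C^d\otimes\C^d)$ with $\tau_A=\rho$ and $\tau_B=\sigma$. I would first establish the upper bound $T\lb\rho\otimes\tfrac{\one_2}{2},\sigma\otimes\tfrac{\one_2}{2}\rb \leq \widetilde T(\rho,\sigma)$. To see this, take any coupling $\tau_{A_1B_1}$ of $\rho$ and $\sigma$, let $\ket{\psi^-}\in\C^2\wedge\C^2$ be the normalized antisymmetric (singlet) state, and set $\tau_{A_1B_1A_2B_2}=\tau_{A_1B_1}\otimes\proj{\psi^-}{\psi^-}$. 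Since the singlet has maximally mixed marginals, $\tau$ is a valid coupling of $\rho\otimes\tfrac{\one_2}{2}$ and $\sigma\otimes\tfrac{\one_2}{2}$. Expanding $\Pasym(d\otimes 2)$ by \eqref{equ:Key} and using that $\proj{\psi^-}{\psi^-}=\Pasym(2)$ is orthogonal to $\Psym(2)$, the term $\Pasym(d)\otimes\Psym(2)$ contributes nothing while $\Psym(d)\otimes\Pasym(2)$ leaves exactly $\Tr\lbr\tau_{A_1B_1}\Psym(d)\rbr$. Minimizing over $\tau_{A_1B_1}$ gives the claimed bound.

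It then suffices to exhibit density matrices with $\widetilde T(\rho,\sigma) < T(\rho,\sigma)$. I would dualize exactly as for $T$: the same SDP duality yields
\[
\widetilde T(\rho,\sigma) = \sup_{E,F}\, \lb\Tr\lbr E\rho\rbr+\Tr\lbr F\sigma\rbr\rb ,
\]
where the supremum runs over Hermitian $E,F\in\M(\C^d)$ with $E\otimes\one_d+\one_d\otimes F\leq\Psym(d)$, whereas $T(\rho,\sigma)$ is the same supremum with $\Psym(d)$ replaced by $\Pasym(d)$. Thus $T$ and $\widetilde T$ are the support functions, evaluated at $(\rho,\sigma)$, of the closed convex sets $\cK_A$ and $\cK_S$ of feasible pairs $(E,F)$ for \eqref{equ:1} and for the symmetric bound respectively.

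Now fix $(E_0,F_0)$ satisfying \eqref{equ:1} but violating \eqref{equ:2}; such pairs exist for every $d\geq 4$ by the construction preceding the theorem, so $(E_0,F_0)\in\cK_A\setminus\cK_S$. I would separate $(E_0,F_0)$ from the closed convex set $\cK_S$ in the real space of Hermitian pairs equipped with $\langle(E,F),(X,Y)\rangle=\Tr\lbr EX\rbr+\Tr\lbr FY\rbr$, obtaining Hermitian $X,Y$ with $\Tr\lbr E_0X\rbr+\Tr\lbr F_0Y\rbr$ strictly larger than $\sup_{(E,F)\in\cK_S}\lb\Tr\lbr EX\rbr+\Tr\lbr FY\rbr\rb$, which is therefore finite. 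The crux is that this finiteness forces $(X,Y)$ to be a scaled pair of states: since $\cK_S$ contains $(E,0)$ for every $E\leq 0$, finiteness gives $\Tr\lbr EX\rbr\leq 0$ for all such $E$, hence $X\geq 0$, and likewise $Y\geq 0$; moreover $\cK_S$ is invariant under the gauge $(E,F)\mapsto(E+s\one_d,F-s\one_d)$, under which the objective shifts by $s(\Tr X-\Tr Y)$, so finiteness forces $\Tr X=\Tr Y=:t>0$. Putting $\rho=X/t$ and $\sigma=Y/t$ yields density matrices with
\[
T(\rho,\sigma)\geq \Tr\lbr E_0\rho\rbr+\Tr\lbr F_0\sigma\rbr > \widetilde T(\rho,\sigma) \geq T\lb\rho\otimes\tfrac{\one_2}{2},\sigma\otimes\tfrac{\one_2}{2}\rb ,
\]
which is precisely the assertion, and partial tracing over the qubit recovers $\rho,\sigma$, so $T$ is not monotone.

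The routine ingredients are the singlet computation and the SDP dualization. The genuine obstacle is the last step: a single dually-infeasible point does not by itself guarantee a primal gap, because the dual feasible set is large and other feasible pairs might recover the value. What saves the argument is that the separating functional is \emph{automatically} a rescaled pair of quantum states, which is special to this SDP and follows from positivity (the recession cone) together with trace-matching (the gauge invariance). In writing this up I would take care to verify strong duality for the $\Psym$-program (it holds by the same reasoning the excerpt invokes for $T$) and the closedness of $\cK_S$, since the separation step depends on both.
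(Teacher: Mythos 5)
Your proof is correct, but it reaches the states $\rho,\sigma$ by a genuinely different route than the paper. Both arguments share the same skeleton: a pair $(E_0,F_0)$ satisfying \eqref{equ:1} but violating \eqref{equ:2}, the singlet coupling together with \eqref{equ:Key} to reduce $\Pasym(d\otimes 2)$ to $\Psym(d)$ on the first factor, and weak duality \eqref{equ:orderSDPs} to lower-bound $T(\rho,\sigma)$. The divergence is in how the states are produced. The paper is more economical: since $E_0\otimes\one_d+\one_d\otimes F_0\nleq\Psym(d)$, there is a unit vector $\ket{\psi}\in\C^d\otimes\C^d$ with $\bra{\psi}\lb E_0\otimes\one_d+\one_d\otimes F_0\rb\ket{\psi}>\bra{\psi}\Psym(d)\ket{\psi}$, and it takes $\rho,\sigma$ to be the marginals of $\proj{\psi}{\psi}$; the coupling $\proj{\psi}{\psi}_{A_1B_1}\otimes\Pasym(2)/\Tr\lbr\Pasym(2)\rbr$ then gives $T\lb\rho\otimes\tfrac{\one_2}{2},\sigma\otimes\tfrac{\one_2}{2}\rb\leq\bra{\psi}\Psym(d)\ket{\psi}$ directly, so the whole chain needs only weak duality for $T$ and no duality at all for your auxiliary $\widetilde T$. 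You instead separate $(E_0,F_0)$ from $\cK_S$ by Hahn--Banach and show the separating functional is automatically a rescaled pair of states (positivity from the recession cone, trace matching from the gauge $(E,F)\mapsto(E+s\one_d,F-s\one_d)$), after which you need strong duality for the $\Psym$-program to convert the separation into $\widetilde T(\rho,\sigma)<\Tr\lbr E_0\rho\rbr+\Tr\lbr F_0\sigma\rbr$. The two points you flagged do close cleanly: strong duality holds by Slater, since $E=F=-\one_d$ is strictly feasible for the dual, and $t>0$ is immediate because $X,Y\geq0$ with $\Tr X=\Tr Y=0$ would force $X=Y=0$, contradicting strict separation. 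It is worth noting that the paper's construction is exactly a concrete instance of your separation step: taking $(X,Y)=\lb\Tr_B\proj{\psi}{\psi},\Tr_A\proj{\psi}{\psi}\rb$ gives $\Tr\lbr EX\rbr+\Tr\lbr FY\rbr=\bra{\psi}\lb E\otimes\one_d+\one_d\otimes F\rb\ket{\psi}\leq\bra{\psi}\Psym(d)\ket{\psi}$ for all $(E,F)\in\cK_S$, so the witness vector of the failed operator inequality hands you a separating functional (and simultaneously a feasible coupling) for free. What your route buys is the structural fact that dual infeasibility for the $\Psym$-program \emph{always} certifies a primal gap at some pair of states; what the paper's route buys is brevity, bypassing both the separation theorem and strong duality for $\widetilde T$.
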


\begin{proof}
Let $E,F\in \M(\C^d)$ be such that 
\[
E\otimes \one_d + \one_d\otimes F \leq \Pasym(d) ,
\]
but also
\[
E\otimes \one_d + \one_d\otimes F \nleq \Psym(d) .
\]
Such examples exist by the discussion preceeding the statement of the theorem. Consider the quantum state
\[
\tau_{A_1B_1A_2B_2} = \proj{\psi}{\psi}_{A_1B_1}\otimes \frac{\Pasym(2)}{\Tr\lbr \Pasym(2)\rbr},
\]
for some pure quantum state $\ket{\psi}\in \C^d\otimes \C^d$ satisfying 
\[
\bra{\psi}(E\otimes \one_d + \one_d\otimes F)\ket{\psi} > \bra{\psi}\Psym(d) \ket{\psi}.
\]
We set $\rho = \Tr_{B_1}\lbr \proj{\psi}{\psi}_{A_1B_1}\rbr$ and $\sigma=\Tr_{A_1}\lbr \proj{\psi}{\psi}_{A_1B_1}\rbr$ and note that the quantum state $\tau_{A_1B_1A_2B_2}$ from above has marginals
\[
\tau_{A_1A_2} = \rho\otimes \frac{\one_2}{2} \quad\text{ and }\quad \tau_{B_1B_2}=\sigma\otimes \frac{\one_2}{2} .
\]
We will now show that 
\[
T(\rho,\sigma) > T(\rho\otimes \frac{\one_2}{2},\sigma\otimes \frac{\one_2}{2}).
\]
For this compute
\begin{align*}
T(\rho\otimes \frac{\one_2}{2},\sigma\otimes \frac{\one_2}{2}) &\leq \Tr\lbr \tau_{A_1B_1A_2B_2} \Pasym(d\otimes 2)\rbr \\
&= \Tr\lbr \lb \proj{\psi}{\psi}_{A_1B_1}\otimes \frac{\Pasym(2)}{\Tr\lbr \Pasym(2)\rbr}\rb \Pasym(d\otimes 2)\rbr \\
&= \bra{\psi}\Psym(d) \ket{\psi} \Tr\lbr \frac{\Pasym(2)}{\Tr\lbr \Pasym(2)\rbr} \rbr \\
&< \bra{\psi}(E\otimes \one_d + \one_d\otimes F)\ket{\psi} \\
&= \Tr\lbr E \rho\rbr + \Tr\lbr F\sigma\rbr \\
&\leq T^*(\rho,\sigma) \\
&\leq T(\rho,\sigma),
\end{align*}
where the third line follows from \eqref{equ:Key} and the last step follows from \eqref{equ:orderSDPs}. 
\end{proof}

To gain a better understanding for how the quantum optimal transport cost can fail to be monotone under partial traces, we can use a symmetrization argument. Recall the $UU$-twirling channel $\Sigma_d:\M_d\otimes \M_d \ra \M_d\otimes \M_d$ defined as
\[
\Sigma_d(X) = \int_{\cU_{d}} \lb U\otimes U\rb X \lb U^\dagger\otimes U^\dagger\rb \text{d}U,
\]
where the integral is with respect to the Haar measure on group $\cU_{d}$ of $d\times d$ unitary matrices. Using Schur-Weyl duality~\cite{werner1989quantum}, it can be shown that 
\begin{equation}\label{equ:Twirl}
\Sigma_d(X) = \Tr\lbr X \Psym(d)\rbr \frac{\Psym(d)}{\Tr\lbr \Psym(d)\rbr} + \Tr\lbr X \Pasym(d)\rbr \frac{\Pasym(d)}{\Tr\lbr \Pasym(d)\rbr} .
\end{equation}
With this identity we can show the following theorem:

\begin{thm}\label{thm:MaxMonViol}
For quantum states $\rho_1,\sigma_1\in \D(\C^{d_1})$ and $\rho_2,\sigma_2\in \D(\C^{d_2})$ we have
\[
T(\rho_1\otimes \rho_2,\sigma_1\otimes \sigma_2) \geq T\lb \rho_1\otimes \frac{\one_2}{2}, \sigma_1\otimes \frac{\one_2}{2}\rb .
\]
\end{thm}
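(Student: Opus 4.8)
The plan is to run the symmetrization argument suggested by \eqref{equ:Twirl} together with the sector decomposition \eqref{equ:Key}, converting an optimal coupling on the $d_2$-dimensional ancilla into a \emph{feasible} coupling on a qubit ancilla while exactly preserving the objective value. First I would fix an optimal coupling $\tau_{A_1B_1A_2B_2}$ for the left-hand side, i.e.\ a state with marginals $\tau_{A_1A_2}=\rho_1\otimes\rho_2$ and $\tau_{B_1B_2}=\sigma_1\otimes\sigma_2$ attaining $T(\rho_1\otimes\rho_2,\sigma_1\otimes\sigma_2)$. Conceptually, applying $\ident_{A_1B_1}\otimes\Sigma_{d_2}$ to the $A_2B_2$ factors leaves the objective unchanged, since by \eqref{equ:Twirl} and \eqref{equ:Key} the operator $\Pasym(d_1\otimes d_2)$ is a fixed point of this self-adjoint channel; in practice it is cleaner to read off directly the two positive operators on $A_1B_1$
\[
P = \Tr_{A_2B_2}\lbr (\one_{d_1}\otimes\one_{d_1}\otimes\Psym(d_2))\,\tau_{A_1B_1A_2B_2}\rbr, \qquad Q = \Tr_{A_2B_2}\lbr (\one_{d_1}\otimes\one_{d_1}\otimes\Pasym(d_2))\,\tau_{A_1B_1A_2B_2}\rbr .
\]

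Next I would use \eqref{equ:Key} to rewrite the optimal value in terms of $P$ and $Q$. Since $\Pasym(d_1)$ and $\Psym(d_1)$ act on $A_1B_1$ while $\Psym(d_2),\Pasym(d_2)$ act on $A_2B_2$, pulling the partial trace over $A_2B_2$ through each term gives
\[
T(\rho_1\otimes\rho_2,\sigma_1\otimes\sigma_2) = \Tr\lbr \tau_{A_1B_1A_2B_2}\,\Pasym(d_1\otimes d_2)\rbr = \Tr\lbr P\,\Pasym(d_1)\rbr + \Tr\lbr Q\,\Psym(d_1)\rbr .
\]
I would also record, using $\Psym(d_2)+\Pasym(d_2)=\one_{d_2}\otimes\one_{d_2}$, the identities $P+Q=\tau_{A_1B_1}$, hence $P_{A_1}+Q_{A_1}=\rho_1$ and $P_{B_1}+Q_{B_1}=\sigma_1$, together with $\Tr P+\Tr Q=1$.

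I would then build a candidate coupling for the right-hand side on a qubit ancilla by setting
\[
\omega_{A_1B_1A_2B_2} = P\otimes\frac{\Psym(2)}{\Tr\lbr\Psym(2)\rbr} + Q\otimes\frac{\Pasym(2)}{\Tr\lbr\Pasym(2)\rbr} ,
\]
which is manifestly positive with unit trace. The crucial computation is that a single-qubit partial trace of each normalized projector is maximally mixed, $\Tr_{B_2}\lbr \Psym(2)/\Tr\lbr\Psym(2)\rbr\rbr = \Tr_{B_2}\lbr \Pasym(2)/\Tr\lbr\Pasym(2)\rbr\rbr = \one_2/2$; combined with $P_{A_1}+Q_{A_1}=\rho_1$ and $P_{B_1}+Q_{B_1}=\sigma_1$ this yields $\omega_{A_1A_2}=\rho_1\otimes\one_2/2$ and $\omega_{B_1B_2}=\sigma_1\otimes\one_2/2$, so $\omega$ is feasible for $T(\rho_1\otimes\one_2/2,\sigma_1\otimes\one_2/2)$. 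Repeating the objective computation above with $d_2$ replaced by $2$, and using $\Psym(2)\Pasym(2)=0$, gives $\Tr\lbr\omega\,\Pasym(d_1\otimes 2)\rbr=\Tr\lbr P\,\Pasym(d_1)\rbr+\Tr\lbr Q\,\Psym(d_1)\rbr$, which is exactly the optimal value on the left; minimality of the transport cost then delivers the claimed inequality.

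I expect the main obstacle to be the marginal bookkeeping in the construction of $\omega$: the argument works only because \emph{both} the symmetric and the antisymmetric two-qubit states have maximally mixed one-qubit marginals, so that the two sectors weighted by $P$ and $Q$ recombine into a clean product marginal $\rho_1\otimes\one_2/2$ regardless of how the weight splits between the symmetric and antisymmetric parts. This is precisely the property that makes the maximally mixed qubit a ``universal'' ancilla and explains the factor $\one_2/2$ on the right; the remaining steps are direct consequences of \eqref{equ:Key} and the orthogonality $\Psym(d_1)\Pasym(d_1)=0$.
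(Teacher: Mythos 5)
Your proof is correct and follows essentially the same route as the paper: the paper twirls the ancilla with $\Sigma_{d_2}$ to obtain the decomposition $\hat{\tau} = X_{A_1B_1}\otimes \Psym(d_2)/\Tr\lbr\Psym(d_2)\rbr + Y_{A_1B_1}\otimes \Pasym(d_2)/\Tr\lbr\Pasym(d_2)\rbr$, where its $X_{A_1B_1}, Y_{A_1B_1}$ coincide with your $P, Q$, and then builds the same qubit-ancilla coupling $\tilde{\tau}$ with matching objective value and marginals. Your only deviation is cosmetic: you extract $P$ and $Q$ directly as sector-weighted partial traces rather than invoking the twirl channel and Schur--Weyl explicitly, which slightly streamlines the argument without changing it.
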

\begin{proof}
Consider a quantum state $\tau_{A_1B_1A_2B_2}\in \D(\C^{d_1}\otimes \C^{d_1} \otimes \C^{d_2}\otimes \C^{d_2})$ such that 
\[
\tau_{A_1A_2} = \rho_1\otimes \rho_2 \quad\text{ and }\quad \tau_{B_1B_2} = \sigma_1\otimes \sigma_{2} ,
\]
and such that 
\[
T\lb \rho_1\otimes \rho_2 , \sigma_1\otimes \sigma_2\rb = \Tr\lbr\tau_{A_1B_1A_2B_2}\Pasym(d_1\otimes d_2)\rbr .
\] 
By \eqref{equ:Key} we have 
\[
(\ident_{d_1}\otimes \ident_{d_1}\otimes \Sigma_{d_2})\lb \Pasym(d_1\otimes d_2)\rb = \Pasym(d_1\otimes d_2).
\]
Using \eqref{equ:Twirl} and that the twirl $\Sigma_{d_2}$ is self-adjoint with respect to the Hilbert-Schmidt inner product, we find that 
\[
\Tr\lbr\tau_{A_1B_1A_2B_2}\Pasym(d_1\otimes d_2)\rbr = \Tr\lbr\hat{\tau}_{A_1B_1A_2B_2}\Pasym(d_1\otimes d_2)\rbr ,
\]
where $\hat{\tau}_{A_1B_1A_2B_2}\in \D(\C^{d_1}\otimes \C^{d_1} \otimes \C^{d_2}\otimes \C^{d_2})$ is a quantum state of the form
\[
\hat{\tau}_{A_1B_1A_2B_2} = X_{A_1B_1}\otimes \frac{\Psym(d_2)}{\Tr\lbr \Psym(d_2)\rbr} + Y_{A_1B_1}\otimes \frac{\Pasym(d_2)}{\Tr\lbr \Pasym(d_2)\rbr} ,
\]
with positive operators $X_{A_1B_1},Y_{A_1B_1}\in \cB\lb \C^{d_1}\otimes \C^{d_1}\rb^+$ such that 
\[
X_{A_1B_1} + Y_{A_1B_1} = \tau_{A_1B_1}.
\]
Finally, observe that 
\begin{align*}
\Tr\lbr\hat{\tau}_{A_1B_1A_2B_2}\Pasym(d_1\otimes d_2)\rbr &= \Tr\lbr X_{A_1B_1} \Pasym(d_1)\rbr + \Tr\lbr Y_{A_1B_1} \Psym(d_1)\rbr \\
&= \Tr\lbr\tilde{\tau}_{A_1B_1A_2B_2}\Pasym(d_1\otimes 2)\rbr ,
\end{align*}
for the quantum state 
\[
\tilde{\tau}_{A_1B_1A_2B_2} = X_{A_1B_1}\otimes \frac{\Psym(2)}{\Tr\lbr \Psym(2)\rbr} + Y_{A_1B_1}\otimes \frac{\Pasym(2)}{\Tr\lbr \Pasym(2)\rbr} .
\]
Since
\[
\tilde{\tau}_{A_1A_2} = X_{A_1}\otimes \frac{\one_2}{2} + Y_{A_1}\otimes \frac{\one_2}{2} = \rho\otimes\frac{\one_2}{2} , 
\]
and
\[
\tilde{\tau}_{B_1B_2} = X_{B_1}\otimes \frac{\one_2}{2} + Y_{B_1}\otimes \frac{\one_2}{2} = \sigma\otimes\frac{\one_2}{2} ,
\] 
we have 
\[
T\lb \rho_1\otimes \rho_2 , \sigma_1\otimes \sigma_{2}\rb = \Tr\lbr\hat{\tau}_{A_1B_1A_2B_2}\Pasym(d_1\otimes d_2)\rbr \geq T\lb \rho\otimes \frac{\one_2}{2} , \sigma\otimes \frac{\one_2}{2}\rb .
\]
\end{proof}

Theorem \ref{thm:MaxMonViol} shows that the violation of monotonicity under partial traces is exhibited strongest when tensoring with the maximally mixed qubit state as in Theorem \ref{thm:ViolMon}.

\section{Stabilized quantum Wasserstein semi-distance}\label{sec:stab}

We have shown that the quantum optimal transport cost (and hence the quantum Wasserstein semi-distance) from above is not monotone under quantum channels and it is natural to ask whether this shortcoming can be fixed by modifying the definition. Let us define the following quantity:

\begin{defn}
For quantum states $\rho,\sigma\in \D\lb \C^d\rb$ we define the \emph{stabilized quantum optimal transport cost} as
\[
T_s(\rho,\sigma) = \inf_{\gamma} T\lb\rho\otimes \gamma,\sigma\otimes \gamma\rb ,
\]
where the infimum is over quantum states $\gamma\in \D\lb \C^D\rb$ of any dimensions $D\in\N$. The \emph{stabilized quantum Wasserstein semi-distance} of order $2$ is then defined as
\[
W_s(\rho,\sigma) = \sqrt{T_s(\rho,\sigma)} .
\]
\end{defn}

From Theorem \ref{thm:MaxMonViol} and its proof we immediately get a precise characterization of the stabilized quantum optimal transport cost in terms of the original quantum optimal transport cost and as a semidefinite program:

\begin{thm}\label{thm:FormulaStab}
For quantum states $\rho,\sigma\in \D\lb \C^d\rb$ we have
\begin{align*}
T_s\lb \rho,\sigma\rb &= T\lb \rho\otimes \frac{\one_2}{2}, \sigma\otimes \frac{\one_2}{2}\rb \\
&= \min_{X_{AB},Y_{AB}}\lb \Tr\lbr  X_{AB} \Psym(d) \rbr + \Tr\lbr  Y_{AB} \Pasym(d) \rbr\rb ,
\end{align*}
where the minimum is over positive operators $X_{AB},Y_{AB}\in \lb \M_d\otimes \M_d\rb^+$ such that 
\[
X_A + Y_A = \rho , \quad\text{ and }\quad X_B + Y_B = \sigma.
\]
\end{thm}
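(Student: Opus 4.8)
The plan is to derive both equalities directly from Theorem~\ref{thm:MaxMonViol} together with the twirling construction used in its proof; no new ideas beyond what is already set up are needed, only careful bookkeeping.

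First I would establish the equality $T_s(\rho,\sigma) = T(\rho\otimes\frac{\one_2}{2}, \sigma\otimes\frac{\one_2}{2})$ by a sandwich argument. For the lower bound I apply Theorem~\ref{thm:MaxMonViol} with $\rho_1=\rho$, $\sigma_1=\sigma$ and $\rho_2=\sigma_2=\gamma$ for an arbitrary state $\gamma\in\D(\C^D)$, which gives $T(\rho\otimes\gamma,\sigma\otimes\gamma)\geq T(\rho\otimes\frac{\one_2}{2},\sigma\otimes\frac{\one_2}{2})$; taking the infimum over all $\gamma$ yields $T_s(\rho,\sigma)\geq T(\rho\otimes\frac{\one_2}{2},\sigma\otimes\frac{\one_2}{2})$. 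For the reverse bound I note that $\gamma=\frac{\one_2}{2}$ is itself an admissible choice (of dimension $D=2$) in the infimum defining $T_s$, so $T_s(\rho,\sigma)\leq T(\rho\otimes\frac{\one_2}{2},\sigma\otimes\frac{\one_2}{2})$. The two bounds give the first equality and, as a byproduct, show the infimum in the definition of $T_s$ is attained at the maximally mixed qubit.

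Next I would prove the semidefinite characterization of $T(\rho\otimes\frac{\one_2}{2},\sigma\otimes\frac{\one_2}{2})$ by two inequalities. For $\leq$, given any feasible positive operators $X_{AB},Y_{AB}$ with $X_A+Y_A=\rho$ and $X_B+Y_B=\sigma$, I form the candidate transport plan
\[
\tau_{A_1B_1A_2B_2} = Y_{AB}\otimes\frac{\Psym(2)}{\Tr[\Psym(2)]} + X_{AB}\otimes\frac{\Pasym(2)}{\Tr[\Pasym(2)]}.
\]
Using that $\Psym(2)$ and $\Pasym(2)$ both have normalized reduced state $\frac{\one_2}{2}$, one checks $\tau_{A_1A_2}=\rho\otimes\frac{\one_2}{2}$ and $\tau_{B_1B_2}=\sigma\otimes\frac{\one_2}{2}$, and a direct computation using \eqref{equ:Key} together with $\Psym(2)\Pasym(2)=0$ gives $\Tr[\tau_{A_1B_1A_2B_2}\Pasym(d\otimes 2)] = \Tr[X_{AB}\Psym(d)] + \Tr[Y_{AB}\Pasym(d)]$, so $T$ is bounded above by the objective value. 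For $\geq$, I take an optimal plan for $T(\rho\otimes\frac{\one_2}{2},\sigma\otimes\frac{\one_2}{2})$ and apply the qubit twirl $\Sigma_2$ on the $A_2B_2$ factors; exactly as in the proof of Theorem~\ref{thm:MaxMonViol} this leaves both the objective (since $\Pasym(d\otimes 2)$ is $\Sigma_2$-invariant) and the $A_1A_2$, $B_1B_2$ marginals unchanged, and produces a state of the special form above with positive $X_{AB},Y_{AB}$ obeying the marginal constraints and attaining the objective value. The feasible set being compact (positivity plus $\Tr[X_{AB}]+\Tr[Y_{AB}]=1$), the infimum is a minimum.

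The only genuinely new computation, everything else being contained in Theorem~\ref{thm:MaxMonViol}, is the verification that the normalized symmetric and antisymmetric qubit projectors both have reduced state $\frac{\one_2}{2}$; this forces the marginal constraints into the clean form $X_A+Y_A=\rho$, $X_B+Y_B=\sigma$, and follows in one line from $\Psym(2)=\frac{1}{2}(\one_4+\flip)$, $\Pasym(2)=\frac{1}{2}(\one_4-\flip)$ and $\Tr_{B_2}[\flip]=\one_2$. Finally, the assignment of which operator multiplies $\Psym(d)$ versus $\Pasym(d)$ is immaterial, since the constraint set is symmetric under the exchange $X_{AB}\leftrightarrow Y_{AB}$, so the objective matches the form stated in the theorem.
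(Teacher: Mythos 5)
Your proof is correct and follows essentially the same route as the paper, which derives the theorem directly from Theorem~\ref{thm:MaxMonViol} and its proof: the sandwich argument for the first equality and the qubit twirl $\Sigma_2$ producing states of the form $X_{AB}\otimes\Psym(2)/3+Y_{AB}\otimes\Pasym(2)$ for the semidefinite characterization. Your explicit verification that both normalized qubit projectors have reduced state $\one_2/2$, and the remark on relabeling $X_{AB}\leftrightarrow Y_{AB}$, are exactly the bookkeeping the paper leaves implicit.
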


The stabilized quantum transport cost has many desirable properties:

\begin{lem}[Joint convexity]
For quantum states $\rho_1,\rho_2, \sigma_1, \sigma_2\in \D(\C^d)$ and $\lambda\in\lbr 0,1\rbr$ we have
\begin{align*}
T_s&(\lambda\rho_1+(1-\lambda)\rho_2,\lambda\sigma_1+(1-\lambda)\sigma_2)\\
&\quad\quad\quad \leq \lambda T_s(\rho_1,\sigma_1) + (1-\lambda)T_s(\rho_2,\sigma_2) . 
\end{align*}
\end{lem}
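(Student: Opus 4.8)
The plan is to read the result off directly from the semidefinite-program characterization in Theorem \ref{thm:FormulaStab}. The crucial observation is that in that formulation the objective functional is \emph{linear} in the pair of optimization variables $(X_{AB}, Y_{AB})$, while the constraints consist of the positivity of $X_{AB}$ and $Y_{AB}$ together with the \emph{affine} marginal conditions $X_A + Y_A = \rho$ and $X_B + Y_B = \sigma$. Joint convexity is then the standard consequence of a linear objective over a feasible set that behaves well under convex combinations.

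Concretely, first I would fix optimizers for the two endpoints: by Theorem \ref{thm:FormulaStab} choose positive operators $X^{(1)}_{AB}, Y^{(1)}_{AB} \in (\M_d\otimes\M_d)^+$ with $X^{(1)}_A + Y^{(1)}_A = \rho_1$ and $X^{(1)}_B + Y^{(1)}_B = \sigma_1$ attaining $T_s(\rho_1,\sigma_1)$, and likewise $X^{(2)}_{AB}, Y^{(2)}_{AB}$ attaining $T_s(\rho_2,\sigma_2)$. Then I would form the convex combinations $X_{AB} = \lambda X^{(1)}_{AB} + (1-\lambda) X^{(2)}_{AB}$ and $Y_{AB} = \lambda Y^{(1)}_{AB} + (1-\lambda) Y^{(2)}_{AB}$, which remain positive since the positive cone is convex.

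Next I would verify feasibility for the mixed problem. Taking partial traces and using linearity of the marginal maps gives $X_A + Y_A = \lambda\rho_1 + (1-\lambda)\rho_2$ and $X_B + Y_B = \lambda\sigma_1 + (1-\lambda)\sigma_2$, so $(X_{AB}, Y_{AB})$ is an admissible point for the stabilized cost of the two mixtures. Since $T_s$ of the mixed states is the minimum over all such admissible pairs, and since the objective $\Tr\lbr X_{AB}\Psym(d)\rbr + \Tr\lbr Y_{AB}\Pasym(d)\rbr$ is linear in $(X_{AB},Y_{AB})$, evaluating it on this point yields exactly $\lambda T_s(\rho_1,\sigma_1) + (1-\lambda)T_s(\rho_2,\sigma_2)$, and the claimed inequality follows.

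The main obstacle is essentially nonexistent once Theorem \ref{thm:FormulaStab} is invoked: the only things requiring a check are that the positive cone is convex and that the marginal constraints and the trace objective transform affinely (respectively linearly) under convex combinations, both of which are routine. As an alternative one could argue directly from the identity $T_s(\rho,\sigma) = T(\rho\otimes\frac{\one_2}{2}, \sigma\otimes\frac{\one_2}{2})$ together with the joint convexity of $T$ established in~\cite{cole2021quantum}, but the route through the semidefinite program is cleaner and self-contained.
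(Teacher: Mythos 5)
Your proof is correct, but it takes a different route from the paper's. The paper disposes of this lemma in one line: it invokes the identity $T_s(\rho,\sigma) = T\lb \rho\otimes \frac{\one_2}{2},\sigma\otimes\frac{\one_2}{2}\rb$ from Theorem \ref{thm:FormulaStab} and then simply cites the joint convexity of $T$ itself (Proposition 2.2 of~\cite{cole2021quantum}), using that $\rho\mapsto\rho\otimes\frac{\one_2}{2}$ is linear so that mixing commutes with tensoring by the maximally mixed qubit. You instead work with the second characterization in Theorem \ref{thm:FormulaStab} --- the semidefinite program over positive $X_{AB},Y_{AB}$ with affine marginal constraints --- and run the standard argument: mix optimal feasible pairs, check positivity and feasibility for the mixed marginals via linearity of the partial trace, and use linearity of the objective. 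This is exactly the kind of argument that proves joint convexity of $T$ in~\cite{cole2021quantum} in the first place, so your version is more self-contained: it does not import the external convexity result, at the cost of re-deriving a routine fact. One minor point worth making explicit is that you select exact optimizers; this is justified because Theorem \ref{thm:FormulaStab} states a minimum (the feasible set is compact, since $\Tr\lbr X_{AB}\rbr + \Tr\lbr Y_{AB}\rbr = 1$ forces boundedness, and the objective is continuous), though an $\varepsilon$-optimizer argument would also work without attainment. You correctly identify the paper's route as the alternative in your closing remark, so both arguments are in view; either is acceptable.
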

\begin{proof}
This follows easily from the joint convexity of $T$ (see \cite[Proposition 2.2]{cole2021quantum}) and Theorem \ref{thm:FormulaStab}. 
\end{proof}

We also have the following:

\begin{lem}[Invariance under tensor products]\label{thm:InvTP}
For quantum states $\rho,\sigma\in \D(\C^d)$ and any quantum state $\gamma\in \D(\C^D)$ we have
\[
T_s\lb \rho\otimes \gamma, \sigma\otimes \gamma\rb = T_s\lb \rho,\sigma\rb .
\]
\end{lem}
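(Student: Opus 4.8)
The plan is to prove the identity by two opposite inequalities, working directly from the definition
\[
T_s\lb \rho\otimes\gamma, \sigma\otimes\gamma\rb = \inf_{\gamma'} T\lb (\rho\otimes\gamma)\otimes\gamma', (\sigma\otimes\gamma)\otimes\gamma'\rb,
\]
where $\gamma'$ ranges over quantum states of arbitrary finite dimension. Neither Theorem \ref{thm:FormulaStab} nor the semidefinite formula is actually needed: the argument rests only on the definition of $T_s$, on monotonicity under tensoring (Theorem \ref{thm:MonUnderTensor}), and on the unitary invariance of $T$.

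For the inequality $T_s\lb \rho\otimes\gamma, \sigma\otimes\gamma\rb \geq T_s\lb \rho,\sigma\rb$ I would use associativity of the tensor product to write $(\rho\otimes\gamma)\otimes\gamma' = \rho\otimes\lb\gamma\otimes\gamma'\rb$, so that the displayed infimum is an infimum of $T\lb \rho\otimes\eta, \sigma\otimes\eta\rb$ over ancillas $\eta$ of the special form $\eta=\gamma\otimes\gamma'$. These form a subfamily of all ancilla states, and an infimum over a smaller family can only be larger, so it is bounded below by $\inf_\eta T\lb \rho\otimes\eta,\sigma\otimes\eta\rb = T_s\lb\rho,\sigma\rb$. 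This direction needs no further input.

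For the reverse inequality I would fix an arbitrary ancilla $\delta$ and apply Theorem \ref{thm:MonUnderTensor} with the extra factor $\gamma$, obtaining
\[
T\lb \rho\otimes\delta\otimes\gamma, \sigma\otimes\delta\otimes\gamma\rb \leq T\lb \rho\otimes\delta, \sigma\otimes\delta\rb.
\]
The state $\rho\otimes\delta\otimes\gamma$ differs from $\rho\otimes\gamma\otimes\delta$ only by the unitary $\one_d\otimes S$ that exchanges the $\delta$ and $\gamma$ tensor factors, and the same is true for the $\sigma$-states, so by unitary invariance of $T$ the left-hand side equals $T\lb\rho\otimes\gamma\otimes\delta, \sigma\otimes\gamma\otimes\delta\rb$. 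This last quantity is a feasible value of the infimum defining $T_s\lb\rho\otimes\gamma,\sigma\otimes\gamma\rb$ (take $\gamma'=\delta$), hence $T_s\lb\rho\otimes\gamma,\sigma\otimes\gamma\rb \leq T\lb\rho\otimes\delta,\sigma\otimes\delta\rb$; taking the infimum over $\delta$ gives $T_s\lb\rho\otimes\gamma,\sigma\otimes\gamma\rb\leq T_s\lb\rho,\sigma\rb$, and combining with the first inequality closes the proof.

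The only ingredient requiring separate justification, and the step I expect to be the main (though minor) obstacle, is the unitary invariance $T\lb U\rho U^\dagger, U\sigma U^\dagger\rb = T\lb\rho,\sigma\rb$ for every unitary $U$. I would establish it by observing that $\Pasym(d)$ commutes with $U\otimes U$, which follows from the corresponding property of the flip, $\mathbbm{F}\lb U\otimes U\rb = \lb U\otimes U\rb\mathbbm{F}$; consequently $\tau_{AB}\mapsto \lb U\otimes U\rb\tau_{AB}\lb U^\dagger\otimes U^\dagger\rb$ is a bijection of the feasible set of the optimal transport program that sends marginals $\lb\rho,\sigma\rb$ to $\lb U\rho U^\dagger, U\sigma U^\dagger\rb$ while leaving the objective $\Tr\lbr\tau_{AB}\Pasym(d)\rbr$ unchanged. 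The remainder of the argument is purely bookkeeping about the order in which the ancilla factors are tensored.
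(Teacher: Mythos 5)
Your proof is correct, and it takes a genuinely different route from the paper's. The paper derives the lemma from Theorem \ref{thm:FormulaStab}, i.e., from the identity $T_s(\rho,\sigma)=T\lb\rho\otimes\frac{\one_2}{2},\sigma\otimes\frac{\one_2}{2}\rb$, which in turn rests on the $UU$-twirling argument behind Theorem \ref{thm:MaxMonViol}: both inequalities there come from comparing $T\lb\rho\otimes\gamma\otimes\frac{\one_2}{2},\sigma\otimes\gamma\otimes\frac{\one_2}{2}\rb$ against $T\lb\rho\otimes\frac{\one_2}{2},\sigma\otimes\frac{\one_2}{2}\rb$ using Theorems \ref{thm:FormulaStab} and \ref{thm:MonUnderTensor}. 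You avoid the twirling machinery entirely and argue directly from the definition of $T_s$ as an infimum: your $\geq$ direction is just the observation that ancillas of the form $\gamma\otimes\gamma'$ form a subfamily of all ancillas, and your $\leq$ direction combines Theorem \ref{thm:MonUnderTensor} with unitary invariance of $T$ to show that each value $T\lb\rho\otimes\delta,\sigma\otimes\delta\rb$ dominates a feasible value of the infimum defining $T_s\lb\rho\otimes\gamma,\sigma\otimes\gamma\rb$. Your justification of unitary invariance of $T$ --- that $\mathbbm{F}\lb U\otimes U\rb=\lb U\otimes U\rb\mathbbm{F}$ makes $\Pasym(d)$ commute with $U\otimes U$, so conjugation by $U\otimes U$ is a bijection of the feasible set preserving the objective while transporting the marginals --- is exactly right, and worth making explicit: the paper uses such tensor-factor reorderings tacitly and only states unitary invariance for $T_s$ (Lemma \ref{lem:UnInv}), not for $T$ itself. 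The trade-off between the two approaches: the paper's route yields the stronger structural fact that the infimum defining $T_s$ is actually attained at $\gamma=\frac{\one_2}{2}$ (together with an explicit semidefinite formulation), from which the lemma drops out as a corollary; your route is more elementary and self-contained, and would apply verbatim to any stabilization $\inf_\gamma T\lb\rho\otimes\gamma,\sigma\otimes\gamma\rb$ of a functional that is tensor-monotone and unitarily invariant --- but it gives no information about whether, or where, the infimum is attained.
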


\begin{proof}
By Theorem \ref{thm:FormulaStab} we have 
\begin{align*}
T\lb \rho\otimes \gamma\otimes \frac{\one_2}{2}, \sigma\otimes\gamma\otimes \frac{\one_2}{2}\rb \geq T\lb \rho\otimes \frac{\one_2}{2}, \sigma\otimes \frac{\one_2}{2}\rb .
\end{align*}
But using Theorem \ref{thm:MonUnderTensor}, we also have
\[
T\lb \rho\otimes \gamma\otimes \frac{\one_2}{2}, \sigma\otimes\gamma\otimes \frac{\one_2}{2}\rb \leq T\lb \rho\otimes \frac{\one_2}{2}, \sigma\otimes \frac{\one_2}{2}\rb,
\]
Therefore, we have the equality 
\begin{align*}
T_s\lb \rho\otimes \gamma, \sigma\otimes \gamma\rb &= T\lb \rho\otimes \gamma\otimes \frac{\one_2}{2}, \sigma\otimes\gamma\otimes \frac{\one_2}{2}\rb \\
&= T\lb \rho\otimes \frac{\one_2}{2}, \sigma\otimes \frac{\one_2}{2}\rb = T_s(\rho,\sigma).
\end{align*}
\end{proof}

Finally, we note that the following lemma is clear by the definition of $T_s$:

\begin{lem}\label{lem:UnInv}
For quantum states $\rho,\sigma\in \D(\C^d)$ and any unitary $U\in \cU_d$ we have
\[
T_s\lb U\rho U^\dagger, U\sigma U^{\dagger}\rb = T_s\lb \rho,\sigma\rb .
\]
\end{lem}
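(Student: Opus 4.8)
The plan is to reduce the claim to the unitary invariance of the unstabilized cost $T$ under the \emph{diagonal} conjugation $(\rho,\sigma)\mapsto(W\rho W^\dagger, W\sigma W^\dagger)$, and then feed this fact into the definition of $T_s$.

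First I would establish the following invariance of $T$: for any unitary $W\in\cU_n$ and any $\rho,\sigma\in\D(\C^n)$,
\[
T\lb W\rho W^\dagger, W\sigma W^\dagger\rb = T\lb \rho,\sigma\rb.
\]
The key observation is that the antisymmetric projector commutes with the diagonal action, $[\Pasym(n), W\otimes W]=0$, because $W\otimes W$ commutes with the flip $\mathbbm{F}_n$ and hence leaves the antisymmetric subspace $\C^n\wedge\C^n$ invariant. Given any coupling $\tau_{AB}$ with marginals $\tau_A=\rho$ and $\tau_B=\sigma$, the state $\tau'_{AB}=\lb W\otimes W\rb\tau_{AB}\lb W^\dagger\otimes W^\dagger\rb$ is a coupling of $W\rho W^\dagger$ and $W\sigma W^\dagger$, and by cyclicity of the trace together with the commutation relation,
\[
\Tr\lbr \tau'_{AB}\Pasym(n)\rbr = \Tr\lbr \tau_{AB}\lb W^\dagger\otimes W^\dagger\rb\Pasym(n)\lb W\otimes W\rb\rbr = \Tr\lbr \tau_{AB}\Pasym(n)\rbr.
\]
Since conjugation by $W\otimes W$ is a bijection on the set of feasible couplings (with inverse given by conjugation by $W^\dagger\otimes W^\dagger$) that preserves the objective, the two minima coincide.

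Next I would apply this invariance on the enlarged system. Fix a state $\gamma\in\D(\C^D)$ and set $W=U\otimes\one_D\in\cU_{dD}$. Since
\[
U\rho U^\dagger\otimes\gamma = W\lb\rho\otimes\gamma\rb W^\dagger \quad\text{and}\quad U\sigma U^\dagger\otimes\gamma = W\lb\sigma\otimes\gamma\rb W^\dagger,
\]
the invariance just proved gives
\[
T\lb U\rho U^\dagger\otimes\gamma, U\sigma U^\dagger\otimes\gamma\rb = T\lb \rho\otimes\gamma, \sigma\otimes\gamma\rb.
\]
Taking the infimum over all $\gamma\in\D(\C^D)$ and all $D\in\N$ on both sides then yields $T_s(U\rho U^\dagger, U\sigma U^\dagger)=T_s(\rho,\sigma)$ directly from the definition of $T_s$.

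I expect no serious obstacle here; the only point requiring care is the commutation $[\Pasym(n), W\otimes W]=0$, which is precisely the $U\otimes U$-invariance of the antisymmetric subspace underlying Schur--Weyl duality. One could equally run the argument through the formula $T_s(\rho,\sigma)=T\lb\rho\otimes \frac{\one_2}{2},\sigma\otimes \frac{\one_2}{2}\rb$ of Theorem \ref{thm:FormulaStab}, applying the invariance of $T$ with $W=U\otimes\one_2$ and bypassing the infimum entirely.
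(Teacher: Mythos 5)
Your proof is correct, and it is precisely the argument the paper leaves implicit: the paper states this lemma without proof as ``clear by the definition of $T_s$,'' the intended reasoning being exactly your observation that $\Pasym(n)$ commutes with $W\otimes W$, hence $T$ is invariant under simultaneous unitary conjugation, and applying this with $W = U\otimes \one_D$ (or $W=U\otimes\one_2$ via Theorem \ref{thm:FormulaStab}) passes the invariance to $T_s$. No gaps; your write-up simply supplies the routine verification the paper omits.
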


We have the following lemma:

\begin{thm}
For any pair of quantum states $\rho,\sigma\in \D\lb \C^d\rb$ and any quantum channel $\phi:\M_{d}\ra \M_{d'}$ we have
\[
T_s\lb \phi(\rho),\phi(\sigma)\rb \leq T_s\lb \rho,\sigma\rb.
\]
\end{thm}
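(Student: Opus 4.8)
The plan is to factor an arbitrary channel into three elementary operations and to show that $T_s$ does not increase under any of them. By the Stinespring dilation theorem every channel $\phi:\M_d\ra\M_{d'}$ can be written as
\[
\phi(\cdot) = \Tr_E\lbr U\lb (\cdot)\otimes \proj{0}{0}\rb U^\dagger\rbr ,
\]
for a fixed pure ancilla state $\proj{0}{0}$, a unitary $U$, and a partial trace over an environment system $E$. Since $\phi(\rho)$ and $\phi(\sigma)$ arise by applying the same three maps to $\rho$ and $\sigma$, it suffices to prove the claimed inequality separately for (i) appending a fixed ancilla, (ii) conjugation by a unitary, and (iii) a partial trace over one tensor factor.

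Steps (i) and (ii) are already settled: appending the same state to both arguments leaves $T_s$ invariant by Lemma \ref{thm:InvTP}, and conjugating both arguments by the same unitary leaves $T_s$ invariant by Lemma \ref{lem:UnInv}. Thus the entire content of the theorem reduces to the monotonicity of $T_s$ under a partial trace, i.e. the inequality $T_s(\Tr_2 \rho, \Tr_2 \sigma)\leq T_s(\rho,\sigma)$ for states $\rho,\sigma\in\D(\C^{d_1}\otimes\C^{d_2})$, where $\Tr_2$ traces out the second factor. This is the step I expect to be the crux, precisely because the unstabilized cost $T$ fails here.

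To handle the partial trace I would work with the semidefinite formulation from Theorem \ref{thm:FormulaStab}. Fix an optimal pair of positive operators $X,Y$ on $(\C^{d_1}\otimes\C^{d_2})^{\otimes 2}$ for $T_s(\rho,\sigma)$, with $X_A+Y_A=\rho$ and $X_B+Y_B=\sigma$. Alongside \eqref{equ:Key} I would use the complementary identity
\[
\Psym(d_1\otimes d_2) = \Psym(d_1)\otimes\Psym(d_2) + \Pasym(d_1)\otimes\Pasym(d_2),
\]
which holds because $\mathbbm{F}_{d_1}\otimes\mathbbm{F}_{d_2}$ is the flip on the reshuffled space. Introducing the four positive operators on $\C^{d_1}\otimes\C^{d_1}$ obtained by tracing out the $d_2$-factors against the symmetric and antisymmetric projectors,
\[
X_{\mathrm{s}} = \Tr_{A_2B_2}\lbr X\lb\one\otimes\Psym(d_2)\rb\rbr,\quad X_{\mathrm{a}} = \Tr_{A_2B_2}\lbr X\lb\one\otimes\Pasym(d_2)\rb\rbr,
\]
and analogously $Y_{\mathrm{s}},Y_{\mathrm{a}}$, the two identities let me expand the objective of $T_s(\rho,\sigma)$ as
\[
\Tr\lbr X_{\mathrm{s}}\Psym(d_1)\rbr + \Tr\lbr X_{\mathrm{a}}\Pasym(d_1)\rbr + \Tr\lbr Y_{\mathrm{a}}\Psym(d_1)\rbr + \Tr\lbr Y_{\mathrm{s}}\Pasym(d_1)\rbr .
\]
The key move is then to feed the \emph{swapped} pair $X' = X_{\mathrm{s}} + Y_{\mathrm{a}}$ and $Y' = X_{\mathrm{a}} + Y_{\mathrm{s}}$ into the semidefinite program for $T_s(\Tr_2\rho,\Tr_2\sigma)$. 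One checks that $X',Y'\geq 0$ and that $X'+Y' = \Tr_{A_2B_2}(X+Y)$, so the marginal constraints $X'_A+Y'_A=\Tr_2\rho$ and $X'_B+Y'_B=\Tr_2\sigma$ hold; moreover its objective $\Tr[X'\Psym(d_1)]+\Tr[Y'\Pasym(d_1)]$ is \emph{term-by-term equal} to the displayed expansion above. Hence $T_s(\Tr_2\rho,\Tr_2\sigma)$ is bounded by this common value $T_s(\rho,\sigma)$, completing step (iii).

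The main obstacle is finding the correct feasible point in this last step: the naive choice $X'=\Tr_{A_2B_2}X$, $Y'=\Tr_{A_2B_2}Y$ does \emph{not} work, since comparing objectives leaves an uncontrolled remainder $\Tr[(X_{\mathrm{a}}-Y_{\mathrm{a}})\mathbbm{F}_{d_1}]$ of indefinite sign. It is exactly the symmetry between the two variables $X$ and $Y$ in the stabilized semidefinite program---absent in the single-variable program underlying $T$---that makes the crossing $X_{\mathrm{s}}\leftrightarrow Y_{\mathrm{s}}$, $X_{\mathrm{a}}\leftrightarrow Y_{\mathrm{a}}$ admissible and produces the exact cancellation. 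A minor point to verify is that $X_{\mathrm{s}},X_{\mathrm{a}},Y_{\mathrm{s}},Y_{\mathrm{a}}$ are genuinely positive, which follows from $X,Y\geq 0$ together with positivity of $\one\otimes\Psym(d_2)$ and $\one\otimes\Pasym(d_2)$.
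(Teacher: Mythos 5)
Your proof is correct, but it resolves the crux by a genuinely different route than the paper. Both arguments start from the Stinespring dilation and dispose of the ancilla-appending and unitary-conjugation steps via Lemma \ref{thm:InvTP} and Lemma \ref{lem:UnInv}; the difference is in how the partial trace is handled. The paper never proves partial-trace monotonicity directly: it tensors both arguments with $\one_{dd'}/dd'$ (again by Lemma \ref{thm:InvTP}), observes that $\frac{\one_{dd'}}{dd'}\otimes\phi(\rho) = (\Delta\otimes \ident_{d'})\lb U(\rho\otimes\proj{\psi}{\psi})U^\dagger\rb$ with $\Delta$ the completely depolarizing channel on the environment, and then uses that $\Delta$ is mixed unitary, so the inequality follows from joint convexity and unitary invariance alone. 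You instead prove $T_s(\Tr_2\rho,\Tr_2\sigma)\leq T_s(\rho,\sigma)$ head-on through the semidefinite program of Theorem \ref{thm:FormulaStab}: your complementary identity $\Psym(d_1\otimes d_2)=\Psym(d_1)\otimes\Psym(d_2)+\Pasym(d_1)\otimes\Pasym(d_2)$ is correct (it follows from $\mathbbm{F}_d=\Psym(d)-\Pasym(d)$ together with \eqref{equ:Key}), your four operators are genuinely positive since, e.g., $\Tr_{A_2B_2}\lbr X(\one\otimes\Psym(d_2))\rbr = \Tr_{A_2B_2}\lbr (\one\otimes\Psym(d_2))X(\one\otimes\Psym(d_2))\rbr\geq 0$, and the crossed pair $X'=X_{\mathrm{s}}+Y_{\mathrm{a}}$, $Y'=X_{\mathrm{a}}+Y_{\mathrm{s}}$ is feasible (indeed $X'+Y'=\Tr_{A_2B_2}(X+Y)$, which yields the correct marginals) with objective exactly equal to $T_s(\rho,\sigma)$, as you verify. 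It is worth noting that your crossing of symmetric and antisymmetric blocks is the same mechanism that drives the twirling computation in the paper's Theorem \ref{thm:MaxMonViol}, deployed here inside the stabilized SDP rather than via the $UU$-twirl. As for what each route buys: the paper's argument is shorter and more generic, a standard dilation-plus-depolarization template using only convexity and the invariance lemmas, whereas yours is constructive and explanatory --- it exhibits an explicit feasible point and pinpoints precisely why the two-block structure of the stabilized program restores the partial-trace monotonicity that the single-variable program for $T$ lacks, a point your remark about the failure of the naive choice $X'=\Tr_{A_2B_2}X$, $Y'=\Tr_{A_2B_2}Y$ makes concrete.
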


\begin{proof}
The proof follows a well-known strategy to show monotonicity of quantities under quantum channels (see, e.g.,~\cite{wolf2012quantum}). For any quantum channel $\phi:\M_{d}\ra \M_{d'}$ there exists a unitary $U\in \cU\lb \C^d \otimes \C^{d'}\otimes \C^{d'}\rb$ and a pure quantum state $\ket{\psi}\in \C^{d'}\otimes \C^{d'}$ such that 
\[
\phi(X) = \Tr_{E}\lbr U(X\otimes \proj{\psi}{\psi})U^\dagger\rbr ,
\]
for every $X\in \M_d$, where $\Tr_E$ denotes the trace over the first two tensor factors of $\C^{d}\otimes \C^{d'}\otimes \C^{d'}$. Furthermore note that the completely depolarizing channel $\Delta:\M_{dd'}\ra \M_{dd'}$ given by
\[
\Delta(X)=\Tr\lbr X\rbr\frac{\one_{dd'}}{dd'}.
\]
is a mixed unitary channel. 

It is easy to see that $T_s$ is monotone under mixed unitary channels since it is jointly convex and unitarily invariant. For any pair of quantum states $\rho,\sigma\in \D\lb \C^d\rb$ we have 
\begin{align*}
T_s &\lb \phi(\rho),\phi(\sigma)\rb =T_s\lb \frac{\one_{dd'}}{dd'}\otimes \phi(\rho), \frac{\one_{dd'}}{dd'}\otimes\phi(\sigma)\rb \\
&= T_s\lb (\Delta\otimes \ident_d)\lb U(\rho\otimes \proj{\psi}{\psi})U^\dagger\rb,(\Delta\otimes \ident_d)\lb U(\sigma\otimes \proj{\psi}{\psi})U^\dagger\rb\rb \\
&\leq T_s\lb U(\rho\otimes \proj{\psi}{\psi})U^\dagger, U(\sigma\otimes \proj{\psi}{\psi})U^\dagger\rb \\
& = T_s\lb \rho\otimes \proj{\psi}{\psi}, \sigma\otimes \proj{\psi}{\psi}\rb \\
&=T_s\lb \rho, \sigma\rb ,
\end{align*} 
where we used Theorem \ref{thm:InvTP} in the first equality, the definition of $\Delta$ in the second equality, and the fact that $\Delta$ is a mixed unitary channel in the inequality. Finally, we used unitary invariance (see Lemma~\ref{lem:UnInv}) and invariance under tensoring with pure quantum states, which is easily seen from the definition of the quantum optimal transport cost.
\end{proof}

\section{Conclusion}

We have shown that the quantum optimal transport cost and the quantum Wasserstein semi-distance introduced in~\cite{chakrabarti2019quantum} and studied in~\cite{cole2021quantum,friedland2022quantum,bistron2022monotonicity} are not monotone under the application of general quantum channels. To fix this shortcoming, we have proposed a stabilized version of the original definition. While this stabilized quantum optimal transport cost is monotone under quantum channels it is still open whether the corresponding stabilized quantum Wasserstein semi-distance satisfies the triangle inequality, which would turn it into a genuine distance. Clearly, this triangle inequality would follow if the triangle inequality for the original definition could be shown. 

\section*{Acknowledgments}
We thank Daniel Stilck Fran\c ca for interesting discussions and many comments that improved this article. AMH acknowledges funding from The Research Council of Norway (project 324944).

\bibliographystyle{alpha}
\bibliography{biblio.bib}

\newcommand{\etalchar}[1]{$^{#1}$}
\begin{thebibliography}{DPMTL21}

\bibitem[BE{\.Z}22]{bistron2022monotonicity}
Rafa{\l} Bistro{\'n}, Micha{\l} Eckstein, and Karol {\.Z}yczkowski.
\newblock {Monotonicity of the quantum 2-Wasserstein distance}.
\newblock {\em arXiv preprint arXiv:2204.07405}, 2022.

\bibitem[CEF{\.Z}21]{cole2021quantum}
Sam Cole, Micha{\l} Eckstein, Shmuel Friedland, and Karol {\.Z}yczkowski.
\newblock Quantum optimal transport.
\newblock {\em arXiv preprint arXiv:2105.06922}, 2021.

\bibitem[CM14]{carlen2014analog}
Eric~A Carlen and Jan Maas.
\newblock {An analog of the 2-Wasserstein metric in non-commutative probability
  under which the fermionic Fokker--Planck equation is gradient flow for the
  entropy}.
\newblock {\em Communications in Mathematical Physics}, 331(3):887--926, 2014.

\bibitem[CYL{\etalchar{+}}19]{chakrabarti2019quantum}
Shouvanik Chakrabarti, Huang Yiming, Tongyang Li, Soheil Feizi, and Xiaodi Wu.
\newblock {Quantum Wasserstein generative adversarial networks}.
\newblock {\em Advances in Neural Information Processing Systems}, 32, 2019.

\bibitem[DPMTL21]{de2021quantum}
Giacomo De~Palma, Milad Marvian, Dario Trevisan, and Seth Lloyd.
\newblock {The quantum Wasserstein distance of order 1}.
\newblock {\em IEEE Transactions on Information Theory}, 67(10):6627--6643,
  2021.

\bibitem[DPT21]{de2021quantum2}
Giacomo De~Palma and Dario Trevisan.
\newblock Quantum optimal transport with quantum channels.
\newblock In {\em Annales Henri Poincar{\'e}}, volume~22, pages 3199--3234.
  Springer, 2021.

\bibitem[FEC{\.Z}22]{friedland2022quantum}
Shmuel Friedland, Micha{\l} Eckstein, Sam Cole, and Karol {\.Z}yczkowski.
\newblock {Quantum Monge-Kantorovich problem and transport distance between
  density matrices}.
\newblock {\em Physical Review Letters}, 129(11):110402, 2022.

\bibitem[GP18]{golse2018wave}
Fran{\c{c}}ois Golse and Thierry Paul.
\newblock {Wave packets and the quadratic Monge--Kantorovich distance in
  quantum mechanics}.
\newblock {\em Comptes Rendus Mathematique}, 356(2):177--197, 2018.

\bibitem[Vil09]{villani2009optimal}
C{\'e}dric Villani.
\newblock {\em Optimal transport: old and new}, volume 338.
\newblock Springer, 2009.

\bibitem[Wer89]{werner1989quantum}
Reinhard~F Werner.
\newblock {Quantum states with Einstein-Podolsky-Rosen correlations admitting a
  hidden-variable model}.
\newblock {\em Physical Review A}, 40(8):4277, 1989.

\bibitem[Wol12]{wolf2012quantum}
Michael~M Wolf.
\newblock {Quantum channels \& operations: Guided tour}.
\newblock {\em Lecture notes available at
  \url{https://www-m5.ma.tum.de/foswiki/pub/M5/Allgemeines/MichaelWolf/QChannelLecture.pdf}},
  2012.

\bibitem[{\.Z}S98]{zyczkowski1998monge}
Karol {\.Z}yczkowski and Wojeciech Slomczynski.
\newblock {The Monge distance between quantum states}.
\newblock {\em Journal of Physics A: Mathematical and General}, 31(45):9095,
  1998.

\bibitem[ZYYY22]{zhou2022quantum}
Li~Zhou, Nengkun Yu, Shenggang Ying, and Mingsheng Ying.
\newblock {Quantum earth mover’s distance, a no-go quantum
  Kantorovich--Rubinstein theorem, and quantum marginal problem}.
\newblock {\em Journal of Mathematical Physics}, 63(10):102201, 2022.

\end{thebibliography}

\end{document}